\documentclass[12pt]{article}
\usepackage{amsmath,amssymb,amsfonts,enumerate,amsthm}
\usepackage[dvipdfm,dvipdf,dvips,pdftex]{graphics,graphicx,color}
\usepackage[hmargin=3cm,vmargin=3cm]{geometry}
\newtheorem{theorem}{Theorem}[section]

\newtheorem{definition}[theorem]{Definition}
\newtheorem{corollary}[theorem]{Corollary}
\newtheorem{remark}[theorem]{Remark}
\newtheorem{example}[theorem]{Example}
\begin{document}
\title{ Some results of domination and total domination in the direct product of two fuzzy graphs }
\author{
Pradip Debnath\thanks{Telephone No. +91 9085 244601, Fax: +91 3842 233797}\\
Department of Mathematics\\
National Institute of Technology Silchar\\
Silchar 788010, Assam\\
India\\
\texttt{debnath.pradip@yahoo.com}}
\date{}
\maketitle
\begin{abstract}
In this article we give a new definition of direct product of two arbitrary fuzzy graphs. We define the concepts of domination and total domination in this new product graph. We obtain an upper bound for the total domination number of the product fuzzy graph. Further we define the concept of total $\alpha$-domination number and derive a lower bound for the total domination number of the product fuzzy graph in terms of the total $\alpha$-domination number of the component graphs. A lower bound for the domination number of the same has also been found.
\end{abstract}
\textbf{Keywords: } Direct product; Fuzzy graph; Domination number; Total domination number.  \\
\textbf{Mathematics Subject Classification: } 05C72; 03E72; 03F55    \\
\section{Introduction with preliminaries}
Mathematical research on the theory of domination for crisp graph was initiated by Ore~\cite{ore62} in 1962 which was further developed by Cockayne and Hedetnieme~\cite{cock77}. The notion of fuzzy graph was introduced by Rosenfeld~\cite{ros75} in 1975. Somasundaram and Somasundaram~\cite{som98} introduced and studied the theory of domination in fuzzy graphs.

In graph theory, four most important types of standard products are Cartesian product, direct product, strong product and lexicographic product. Among all other associative products, only these four products preserve the properties of the factor or component graphs in a meaningful way. Also, these four products have the property that at least one projection is weak homomorphism. The theory of product graph has wide application in science and engineering, e.g., such application may be found in~\cite{lamp74} where this theory was used to model concurrency in multiprocessor systems. This can also be used in automata theory. For more information on such products and applications we refer to~\cite{hamm11, lamp74}. The Cartesian product of two arbitrary fuzzy graphs was defined by Mordeson and Cheng~\cite{mord94}. The theory of domination has not been studied previously in the setting of direct product of fuzzy graphs. Motivated by this fact, in this paper we give a new definition of the direct product of fuzzy graphs and study the theory of domination and total domination in the resulting product fuzzy graph. 

Generally, it is quite difficult to determine the exact domination or total domination number of a product graph. Thus obtaining upper or lower bounds of the same is also of great interest. A very useful literature and survey on the theory of domination of crisp graphs have been detailed by Haynes, Hedetniemi and Slater~\cite{hayn98,hay98}. Some other important work on the theory of domination and total domination may be found in~\cite{nath13, dorb06, moh10, mord98}. For crisp graph theoretic terminologies, the readers are referred to Hararay~\cite{har72}.

Throughout this article, by a fuzzy graph we mean an undirected simple fuzzy graph (i.e., graph without loop or multiple edge) which does not have any isolated vertex, unless otherwise stated.

A fuzzy graph $G=(V(G),E(G),\sigma,\mu)$ consists of two functions $\sigma:V(G)\rightarrow [0,1]$ and $\mu:E(G)\rightarrow [0,1]$ such that $\mu (\{x,y\})\leq \min\{\sigma(x),\sigma (y)\}$ for all $x,y\in V(G)$.

The set $N(x)=\{y\in V(G):\mu (\{x,y\})=\min\{\sigma(x),\sigma (y)\}$ is called the open neighborhood of $x$ and $N[x]=N(x)\cup \{x\}$ is called the closed neighborhood of $x$.

A fuzzy graph $G=(V(G),E(G),\sigma,\mu)$ is said to be complete if $$\mu (\{x,y\})=\min\{\sigma(x),\sigma (y)\}$$ for all $x,y\in V(G)$.

We say that the vertex $x$ dominates $y$ in $G$ if $\mu (\{x,y\})=\min\{\sigma(x),\sigma (y)\}$. A subset $S$ of $V(G)$ is called a dominating set in $G$ if for every $v\notin S$, there exists $u\in S$ such that $u$ dominates $v$. The fuzzy cardinality of $S$ is defined as $\sum_{v\in S}\sigma (v)$. The minimum fuzzy cardinality of a dominating set in $G$ is called the domination number of $G$ and denoted by $\nu(G)$.

A subset $T$ of $V(G)$ is said to be a total dominating set if every vertex in $V(G)$ is dominated by a vertex in $T$. The minimum fuzzy cardinality of a total dominating set is called the total domination number and denoted by $\nu_{t}(G)$. Such a dominating set with minimum fuzzy cardinality is called a minimal dominating set of $G$.

If $g$ is a vertex of $G$, then by the symbol $~^{g}H$ we denote the subgraph of $G\times H$ induced by $\{g\}\times V(H)$. The subgraph $G^{h}$ may be defined in a similar fashion.

Let $V$ be a nonempty set. The weight of a function $f:V\rightarrow \mathbb{R}$ is defined by $w(f)=\sum_{v\in V}f(v)$ and for a nonempty subset $S\subseteq V$, we define $f(S)=\sum_{v\in S}f(v)$.

\section{Domination and total domination in direct product of fuzzy graphs}

Now we discuss our main results.

\begin{definition}
Let $G=(V(G),E(G),\sigma_{1},\mu_{1})$ and $H=(V(H),E(H),\sigma_{2},\mu_{2})$ be two fuzzy graphs. The direct product of $G$ and $H$ is denoted by $G\times H$ and the vertex set of $G\times H$ is defined by $V(G\times H)=V(G)\times V(H)$ while edge set of $G\times H$ is denoted and defined by  
$$E(G\times H)=(V(G)\times V(H))\times (V(G)\times V(H))
=\{\{g_{1}h_{1},g_{2}h_{2}\}:\{g_{1},g_{2}\}\in E(G),\{h_{1},h_{2}\}\in E(H)\}.$$
 
We use the notation $g_{1}g_{2}$ to denote the edge $\{g_{1},g_{2}\}$ in $E(G)$ and use the notation $gh$ to denote the vertex $(g,h)\in V(G\times H)$ when no confusion arises.

Define $\Lambda:V(G)\times V(H)\rightarrow [0,1]$ by $$\Lambda(g,h)=\min \{\sigma_{1}(g),\sigma_{2}(h)\},$$
and $\Gamma:E(G\times H) \rightarrow [0,1]$ by $$\Gamma(g_{1}h_{1},g_{2}h_{2})=\min\{\mu_{1}(g_{1}g_{2}),\mu_{2}(h_{1}h_{2})\},$$

Then we say that $G\times H=(V(G\times H),E(G\times H),\Lambda,\Gamma)$ is the direct product fuzzy graph if $$\Gamma(g_{1}h_{1},g_{2}h_{2})\leq \min\{\Lambda(g_{1}h_{1}),\Lambda(g_{2}h_{2})\}$$ for all $\{g_{1}h_{1},g_{2}h_{2}\}\in E(G\times H)$. Now on we use the term product fuzzy graph to denote the direct product fuzzy graph.

We say that the vertex $g_{1}h_{1}$ dominates the vertex $g_{2}h_{2}$ in $G\times H$ if $$\Gamma(g_{1}h_{1},g_{2}h_{2})= \min\{\Lambda(g_{1}h_{1}),\Lambda(g_{2}h_{2})\}$$.

Order of the product graph $G\times H$ is denoted and defined by $$p=\sum_{(g,h)\in V(G\times H)}\Lambda(g,h).$$
\end{definition}

The following result is a direct consequence of above definition.

\begin{theorem}
If $g_{1}$ dominates $g_{2}$ in $G$ and $h_{1}$ dominates $h_{2}$ in $H$, then the vertex $g_{1}h_{1}$ dominates the vertex $g_{2}h_{2}$ in $G\times H$. 
\end{theorem}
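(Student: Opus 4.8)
The statement asserts that domination in each factor lifts to domination in the direct product. By the definition of domination in a fuzzy graph, the hypotheses say precisely that
\[
\mu_{1}(g_{1}g_{2})=\min\{\sigma_{1}(g_{1}),\sigma_{1}(g_{2})\}
\qquad\text{and}\qquad
\mu_{2}(h_{1}h_{2})=\min\{\sigma_{2}(h_{1}),\sigma_{2}(h_{2})\},
\]
and the conclusion to be reached is that
\[
\Gamma(g_{1}h_{1},g_{2}h_{2})=\min\{\Lambda(g_{1}h_{1}),\Lambda(g_{2}h_{2})\}.
\]
So the entire proof is a computation showing that the right-hand quantity equals $\Gamma$ once we substitute the hypotheses into the definitions of $\Lambda$ and $\Gamma$.

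First I would note that the hypotheses guarantee the edges $g_{1}g_{2}\in E(G)$ and $h_{1}h_{2}\in E(H)$ exist, hence by the definition of $E(G\times H)$ the pair $\{g_{1}h_{1},g_{2}h_{2}\}$ is an edge of the product, so $\Gamma$ is defined on it. Then I would expand the left-hand side using the definition of $\Gamma$ together with the hypotheses, giving
\[
\Gamma(g_{1}h_{1},g_{2}h_{2})
=\min\{\mu_{1}(g_{1}g_{2}),\mu_{2}(h_{1}h_{2})\}
=\min\bigl\{\min\{\sigma_{1}(g_{1}),\sigma_{1}(g_{2})\},\ \min\{\sigma_{2}(h_{1}),\sigma_{2}(h_{2})\}\bigr\}.
\]
Separately I would expand the target right-hand side using the definition of $\Lambda$:
\[
\min\{\Lambda(g_{1}h_{1}),\Lambda(g_{2}h_{2})\}
=\min\bigl\{\min\{\sigma_{1}(g_{1}),\sigma_{2}(h_{1})\},\ \min\{\sigma_{1}(g_{2}),\sigma_{2}(h_{2})\}\bigr\}.
\]

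The crux is then purely the associativity and commutativity of the binary $\min$ operation: both expressions are equal to $\min\{\sigma_{1}(g_{1}),\sigma_{1}(g_{2}),\sigma_{2}(h_{1}),\sigma_{2}(h_{2})\}$, the minimum of the same four real numbers taken in a different grouping order. Equating the two displayed quantities therefore yields exactly the defining condition for $g_{1}h_{1}$ to dominate $g_{2}h_{2}$ in $G\times H$, completing the proof.

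There is no genuine obstacle here; the result is, as the excerpt says, a direct consequence of the definitions. The only point requiring the slightest care is the reindexing step, where I must keep track of which of the four vertex-membership values $\sigma_{1}(g_{1}),\sigma_{1}(g_{2}),\sigma_{2}(h_{1}),\sigma_{2}(h_{2})$ sits in which inner $\min$, and observe that flattening a nested $\min$ of two $\min$'s is insensitive to how the four arguments were originally paired. Once that flattening is made explicit, the two sides coincide verbatim.
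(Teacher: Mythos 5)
Your proof is correct and is exactly the definitional unwinding the paper has in mind: the paper states this theorem without proof, calling it a direct consequence of the definition of $\Lambda$ and $\Gamma$, and your computation --- substituting the two domination hypotheses into $\Gamma$ and flattening the nested $\min$'s of the four values $\sigma_{1}(g_{1}),\sigma_{1}(g_{2}),\sigma_{2}(h_{1}),\sigma_{2}(h_{2})$ --- makes that consequence fully explicit. No gaps; your added observation that the hypotheses ensure $\{g_{1}h_{1},g_{2}h_{2}\}\in E(G\times H)$, so that $\Gamma$ is defined there, is a point the paper leaves implicit.
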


\begin{definition}
Let $G\times H=(V(G\times H),E(G\times H),\Lambda,\Gamma)$ be a product fuzzy graph. A subset $D\subseteq V(G\times H)$ is called a dominating set if for all $(g,h)\in V(G\times H)\setminus D$, there exists $(g^{'},h^{'})\in D$ such that $(g^{'},h^{'})$ dominates $(g,h)$. The fuzzy cardinality of $D$ is denoted and defined by $fc(D)=\sum_{(g,h)\in D}\Lambda(g,h)$.

The minimum fuzzy cardinality taken over all dominating sets in $G\times H$ is called the dominating number of $G\times H$ and denoted by $\nu(G\times H)$.

A subset $D\subseteq V(G\times H)$ is called a total dominating set if every vertex $(g,h)\in V(G\times H)$ is dominated by a vertex $(g^{'},h^{'})\in D$. 

The minimum fuzzy cardinality taken over all total dominating sets in $G\times H$ is called the total dominating number of $G\times H$ and denoted by $\nu_{t}(G\times H)$.
\end{definition}

It is obvious that for the existence of a total dominating set in $G\times H$ it is necessary that every vertex of $G\times H$ be dominated by some other vertex. Thus we have the following:

\begin{theorem}
The product fuzzy graph $G\times H$ has a total dominating set if and only if $G$ and $H$ both have a total dominating set.
\end{theorem}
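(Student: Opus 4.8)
The plan is to first reduce the biconditional to the remark made just above the statement: a fuzzy graph possesses a total dominating set precisely when every one of its vertices is dominated by some (necessarily adjacent, hence distinct) vertex, since in that case the entire vertex set is itself a total dominating set, and the condition is clearly also necessary. Thus I would restate the claim as the equivalence: every vertex of $G\times H$ is dominated by some vertex of $G\times H$ if and only if every vertex of $G$ is dominated in $G$ and every vertex of $H$ is dominated in $H$.

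For the ``if'' direction I would argue directly, and this is routine. Assuming each $g\in V(G)$ has a dominator $g'$ in $G$ and each $h\in V(H)$ has a dominator $h'$ in $H$, I fix an arbitrary $(g,h)\in V(G\times H)$ and choose such $g'$ and $h'$. By the product-domination result proved just above (if $g'$ dominates $g$ in $G$ and $h'$ dominates $h$ in $H$, then $(g',h')$ dominates $(g,h)$ in $G\times H$), the vertex $(g',h')$ dominates $(g,h)$; since the factors are loopless we have $g'\neq g$ and $h'\neq h$, so $(g',h')\neq(g,h)$ and $\{(g',h'),(g,h)\}$ is a genuine edge of the product. Hence every vertex of $G\times H$ is dominated and $V(G\times H)$ is itself a total dominating set.

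The ``only if'' direction is where I expect the real work, and I would attempt it by projection. If $G\times H$ has a total dominating set then each $(g,h)$ is dominated by some $(g',h')$, meaning $\{g',g\}\in E(G)$, $\{h',h\}\in E(H)$, and $\min\{\mu_1(g'g),\mu_2(h'h)\}=\min\{\sigma_1(g'),\sigma_1(g),\sigma_2(h'),\sigma_2(h)\}$; the goal would be to deduce that $g'$ dominates $g$ in $G$ and, symmetrically, that the $H$-coordinate is dominated in $H$. The main obstacle is exactly this projection step: because both $\Gamma$ and $\Lambda$ are defined through $\min$, the displayed equality can hold while $\mu_1(g'g)<\min\{\sigma_1(g'),\sigma_1(g)\}$, provided $\mu_2(h'h)$ is small enough to realise the left-hand minimum and the matching $\sigma_2$-value is small enough on the right. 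In particular, if one factor has uniformly small membership values, every edge of the product carries full weight and represents a domination even when a vertex of $G$ is dominated by no vertex of $G$. My first step would therefore be to stress-test the statement on such ``small second factor'' configurations; should the projection fail there, the honest conclusion is that the ``only if'' direction requires an extra compatibility hypothesis linking the $\sigma$- and $\mu$-values of the two factors, and the theorem should then be established under that hypothesis rather than in full generality.
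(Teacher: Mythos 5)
Your ``if'' direction is correct, and it is essentially all the paper has: the paper states this theorem \emph{without proof}, preceded only by the one-sentence remark that a total dominating set in $G\times H$ exists precisely when every vertex of $G\times H$ is dominated by some other vertex. Combined with the preceding theorem (domination in both factors implies domination in the product), that remark yields exactly your sufficiency argument and nothing more; the projection step needed for necessity is never argued. Your suspicion about that step is justified, and your ``small second factor'' stress test can be made into an explicit counterexample to the ``only if'' direction as stated. Take $V(G)=\{g_{1},g_{2}\}$ with $\sigma_{1}(g_{1})=\sigma_{1}(g_{2})=0.9$ and the single edge $g_{1}g_{2}$ with $\mu_{1}(g_{1}g_{2})=0.5$: neither vertex dominates the other (since $0.5<0.9$), so $G$ has no total dominating set, although neither vertex is isolated in the crisp sense. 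Take $V(H)=\{h_{1},h_{2}\}$ with $\sigma_{2}(h_{1})=\sigma_{2}(h_{2})=\mu_{2}(h_{1}h_{2})=0.1$, a complete fuzzy graph. In $G\times H$ the two edges are $\{g_{1}h_{1},g_{2}h_{2}\}$ and $\{g_{1}h_{2},g_{2}h_{1}\}$, with $\Lambda\equiv 0.1$ and $\Gamma=\min\{0.5,0.1\}=0.1=\min\{\Lambda,\Lambda\}$ on both; hence every vertex of $G\times H$ is dominated by its unique neighbor and $V(G\times H)$ is a total dominating set, while $G$ has none.

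The structural reason is the one you identified: if $(g',h')$ dominates $(g,h)$ and $m=\min\{\Lambda(g'h'),\Lambda(gh)\}$, the equality $\min\{\mu_{1}(g'g),\mu_{2}(h'h)\}=m$ forces only $\mu_{1}(g'g)\geq m$ and $\mu_{2}(h'h)\geq m$, so domination projects to a factor exactly when that factor's vertex memberships attain $m$ (if $m=\min\{\sigma_{1}(g'),\sigma_{1}(g)\}$ then indeed $\mu_{1}(g'g)=m$ and $g'$ dominates $g$), and carries no information about the other factor. One might try to rescue the theorem by reading the standing ``no isolated vertex'' hypothesis in the fuzzy sense ($N(v)\neq\emptyset$ for all $v$), which would exclude my $G$; but then every admissible factor automatically has $V$ as a total dominating set, the theorem collapses to the assertion that every product fuzzy graph has a total dominating set, and this contradicts the paper's own later example of a product fuzzy graph with no total dominating set. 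So under any consistent reading of the paper's conventions, the ``only if'' direction needs an additional compatibility hypothesis of the kind you anticipate (for instance, equal or constant vertex memberships across the two factors). In short, your proposal proves exactly the half of the theorem that is provable, and the obstacle you flagged is a genuine error in the paper's unproved claim, not a gap in your argument.
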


In the following theorem we find an upper bound for the total domination number of the product fuzzy graph.

\begin{theorem}
For any two fuzzy graphs $G$ and $H$, $$\nu_{t}(G\times H)\leq \min\{|D_{2}|\nu_{t}(G),|D_{1}|\nu_{t}(H)\}$$
where $D_{1}$ and $D_{2}$ are minimal total dominating sets for $G$ and $H$ respectively and $|A|$ denotes the crisp cardinality of a set $A$.
\end{theorem}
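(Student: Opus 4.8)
The plan is to prove the bound by exhibiting a single explicit total dominating set of $G\times H$ built from the given minimal total dominating sets $D_1$ and $D_2$, and then estimating its fuzzy cardinality from above in two symmetric ways. The natural candidate is the full product of vertex sets $D=D_1\times D_2\subseteq V(G\times H)$. Since $\nu_t(G\times H)$ is by definition the minimum fuzzy cardinality over all total dominating sets, it suffices to show that $D$ is total dominating and that $fc(D)\leq\min\{|D_2|\nu_t(G),|D_1|\nu_t(H)\}$.

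First I would verify that $D$ is a total dominating set. Let $(g,h)$ be an arbitrary vertex of $G\times H$. Because $D_1$ is a total dominating set of $G$, there is some $g'\in D_1$ that dominates $g$ in $G$; because $D_2$ is a total dominating set of $H$, there is some $h'\in D_2$ that dominates $h$ in $H$. By the earlier result establishing that componentwise domination implies domination in the product (Theorem~2.2), the vertex $g'h'$ dominates $gh$ in $G\times H$, and $g'h'\in D_1\times D_2=D$. As $(g,h)$ was arbitrary, every vertex of $G\times H$ is dominated by a vertex of $D$, so $D$ is indeed total dominating. Next I would compute $fc(D)=\sum_{(g,h)\in D}\Lambda(g,h)=\sum_{g\in D_1}\sum_{h\in D_2}\min\{\sigma_1(g),\sigma_2(h)\}$. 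Applying $\min\{\sigma_1(g),\sigma_2(h)\}\leq\sigma_2(h)$ gives $fc(D)\leq\sum_{g\in D_1}\sum_{h\in D_2}\sigma_2(h)=|D_1|\,\nu_t(H)$, using $\nu_t(H)=fc(D_2)=\sum_{h\in D_2}\sigma_2(h)$; symmetrically, $\min\{\sigma_1(g),\sigma_2(h)\}\leq\sigma_1(g)$ yields $fc(D)\leq|D_2|\,\nu_t(G)$. Taking the smaller of the two bounds gives $\nu_t(G\times H)\leq fc(D)\leq\min\{|D_2|\nu_t(G),|D_1|\nu_t(H)\}$, which is the desired inequality.

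The only genuinely substantive step is the verification that $D_1\times D_2$ is total dominating; everything afterward is a routine use of $\min\{a,b\}\leq a$ and $\min\{a,b\}\leq b$ together with the factorization of the double sum. I would emphasize that Theorem~2.2 is exactly what couples the two componentwise dominations into a single product domination, and that the \emph{total} requirement causes no difficulty precisely because $D_1$ and $D_2$ are themselves total, so the argument applies uniformly to every vertex of $G\times H$ (including those already lying in $D$). Thus no separate case analysis for vertices inside $D$ is needed, and the construction delivers the bound directly.
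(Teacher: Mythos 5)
Your proposal is correct and follows essentially the same route as the paper's own proof: both take $D=D_{1}\times D_{2}$, invoke the componentwise domination result to show $D$ is a total dominating set, and then bound $fc(D)$ by $|D_{2}|\nu_{t}(G)$ and $|D_{1}|\nu_{t}(H)$. Your write-up is in fact slightly more explicit than the paper's, since you spell out the factorization of the double sum and the use of $\min\{\sigma_{1}(g),\sigma_{2}(h)\}\leq \sigma_{1}(g)$ (resp.\ $\leq \sigma_{2}(h)$), a step the paper leaves implicit.
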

\begin{proof}
Suppose $D=D_{1}\times D_{2}\subseteq V(G\times H)$. Let $(g,h)$ be a vertex of $G\times H$. Then there exist $g^{'}\in D_{1}$ and $h^{'}\in D_{2}$ such that $g^{'}$ dominates $g$ in $G$ and $h^{'}$ dominates $h$ in $H$. 

Clearly, in this case $\Gamma(gh,g^{'}h^{'})= \min\{\Lambda(gh),\Lambda(g^{'}h^{'})\}$. Thus $(g^{'},h^{'})$ dominates $(g,h)$, i.e., $(g^{'},h^{'})\in D$. Therefore $D$ is a total dominating set for $G\times H$ and as such we have $$\nu_{t}(G\times H)\leq \sum_{(g,h)\in D}\Lambda(g,h)\leq |D_{2}|\nu_{t}(G).$$

In a similar fashion, it can be proved that $\nu_{t}(G\times H)\leq \sum_{(g,h)\in D}\Lambda(g,h)\leq |D_{1}|\nu_{t}(H).$
\end{proof}

Proof of the following theorem is trivial and hence omitted.

\begin{theorem}
For any product fuzzy graph $G\times H$, $\nu_{t}(G\times H)=p$ if and only if every vertex of $G$ has a unique neighbor.
\end{theorem}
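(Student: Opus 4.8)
The plan is to reduce the equality $\nu_{t}(G\times H)=p$ to a purely structural statement about which single vertices may be deleted from a total dominating set, and then to read off the consequence for the factors via the earlier theorem that component-wise domination implies domination in the product. Throughout I would assume a total dominating set exists (otherwise $\nu_{t}$ is undefined), so that $V(G\times H)$ itself is a total dominating set and hence $\nu_{t}(G\times H)\leq p$; I would also assume $\Lambda$ is strictly positive on every vertex, which is implicit in the standing convention that the graphs carry no isolated vertices and which makes the fuzzy cardinality strictly monotone under deletion.

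First I would establish the reduction. Since $fc$ is additive with nonnegative summands, a total dominating set $D$ satisfies $fc(D)<p$ exactly when $D$ omits some vertex; and if any proper subset is total dominating, then so is $V(G\times H)\setminus\{v\}$ for a suitable $v$, because supersets of total dominating sets are again total dominating. Hence $\nu_{t}(G\times H)=p$ holds if and only if $V(G\times H)\setminus\{v\}$ fails to be total dominating for every vertex $v$. Using that domination is symmetric and that the open neighborhood $N(u)$ is nonempty for each $u$, deleting $v$ destroys total domination precisely when some $u$ satisfies $N(u)=\{v\}$, i.e.\ $v$ is the unique dominating neighbor of $u$ (note $u\neq v$, as there are no loops). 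Thus $\nu_{t}(G\times H)=p$ is equivalent to: every vertex $v$ of $G\times H$ is the unique dominating neighbor of at least one vertex.

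Next I would show this last condition is equivalent to every vertex of $G\times H$ having a unique dominating neighbor. Given the condition, assign to each $v$ a vertex $\phi(v)$ with $N(\phi(v))=\{v\}$; since $N(\phi(v_{1}))=N(\phi(v_{2}))$ forces $v_{1}=v_{2}$, the map $\phi$ is injective, hence a bijection of the finite vertex set. Surjectivity then says every vertex equals some $\phi(v)$ and so has the singleton neighborhood $\{v\}$, i.e.\ a unique dominating neighbor. The converse is immediate: if every vertex has a unique dominating neighbor then $G\times H$ splits into disjoint edges each realizing mutual domination between its endpoints, and each endpoint is the private dominator of the other, so the condition holds.

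Finally I would transfer this to the factors. If some $g_{0}\in V(G)$ had two distinct dominating neighbors $a,b$, then picking any $h_{0}\in V(H)$ with a dominating neighbor $h_{1}$ (which exists since $H$ admits a total dominating set) and invoking the earlier theorem shows that both $(a,h_{1})$ and $(b,h_{1})$ dominate $(g_{0},h_{0})$, so $(g_{0},h_{0})$ would have two dominating neighbors in $G\times H$, contradicting the equivalent condition above; the same applies to $H$. This yields the forward implication from $\nu_{t}(G\times H)=p$ to ``every vertex of $G$ (and, by the same argument, of $H$) has a unique neighbor.'' The \emph{main obstacle} will be the reverse implication: the earlier theorem gives only a \emph{sufficient} condition for domination in the product, and in the fuzzy setting the equality $\Gamma(g'h',gh)=\min\{\Lambda(g'h'),\Lambda(gh)\}$ can hold even when $g'$ does not dominate $g$ in $G$, so the product may acquire dominating adjacencies invisible in the factors. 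Consequently uniqueness of neighbors in $G$ alone need not propagate to $G\times H$, and to close the equivalence one must control exactly which product adjacencies factor through the components; it is here that the hypothesis is genuinely needed on \emph{both} $G$ and $H$ (equivalently, stated directly on $G\times H$), and pinning this down is the crux of the argument.
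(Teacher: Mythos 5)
You are comparing against a proof the paper does not actually contain: the author declares this theorem ``trivial'' and omits the argument entirely, so the only thing to check is whether your proposal establishes the stated equivalence. Your first two paragraphs are correct and are presumably what the author had in mind: since supersets of total dominating sets are total dominating and $\Lambda>0$ makes $fc$ strictly monotone, $\nu_{t}(G\times H)=p$ holds iff no set of the form $V(G\times H)\setminus\{v\}$ is total dominating, i.e.\ iff every $v$ is the unique dominating neighbor of some vertex; and your injectivity/bijection argument (using symmetry of the relation $\Gamma(u,v)=\min\{\Lambda(u),\Lambda(v)\}$ and finiteness) correctly upgrades this to: every vertex of $G\times H$ has a singleton dominating neighborhood, so the product decomposes into mutually dominating pairs. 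The forward transfer to $G$ via the paper's earlier component-wise domination theorem is also sound.

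The gap you flag in the reverse implication is genuine, and in fact it is fatal to the theorem as printed rather than merely a hard step you failed to close: the hypothesis on $G$ alone does not imply $\nu_{t}(G\times H)=p$. Take $G=K_{2}$ with $\sigma_{1}\equiv 0.5$ and $\mu_{1}(g_{1}g_{2})=0.5$, and let $H$ be the path $h_{1}h_{2}h_{3}$ with $\sigma_{2}\equiv 0.5$ and both edge memberships $0.5$. Every vertex of $G$ has a unique neighbor, and neither factor has an isolated vertex. In $G\times H$ every edge $\{g_{1}h_{1},g_{2}h_{2}\}$, $\{g_{1}h_{2},g_{2}h_{1}\}$, $\{g_{1}h_{2},g_{2}h_{3}\}$, $\{g_{1}h_{3},g_{2}h_{2}\}$ satisfies $\Gamma=0.5=\min\{\Lambda,\Lambda\}$, so all adjacencies are dominating, and $(g_{1},h_{2})$ has \emph{two} dominating neighbors, $(g_{2},h_{1})$ and $(g_{2},h_{3})$. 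Consequently $V(G\times H)\setminus\{(g_{2},h_{1})\}$ is still a total dominating set (the deleted vertex is dominated by $(g_{1},h_{2})$, which in turn is dominated by $(g_{2},h_{3})$), giving $\nu_{t}(G\times H)\leq p-0.5<p$. So the correct equivalence is the product-level one your bijection argument fully proves --- $\nu_{t}(G\times H)=p$ iff every vertex of $G\times H$ has a unique dominating neighbor --- and your diagnosis that the factor-level hypothesis cannot propagate to the product (because a product adjacency can be dominating while the $G$-side edge is not) is exactly right. Rather than leaving this as ``the crux of the argument,'' you should record the counterexample and restate the theorem; as it stands, your proposal is an honest partial proof of a statement that is false in the direction you could not prove.
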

\begin{corollary}
If $\nu_{t}(G\times H)=p$, then the number of vertices in $G\times H$ must be even.
\end{corollary}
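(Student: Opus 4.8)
The plan is to read off the corollary as an immediate consequence of the preceding theorem. The first step is to invoke that theorem: since we are assuming $\nu_{t}(G\times H)=p$, it follows that every vertex of $G$ has a unique neighbor, where ``neighbor'' refers to the relation $y\in N(x)$ defined by $\mu_{1}(\{x,y\})=\min\{\sigma_{1}(x),\sigma_{1}(y)\}$. Thus the analytic hypothesis on the total domination number is converted into a purely combinatorial statement about the neighbor structure of $G$.

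The second step is to exploit the symmetry of this relation. Because $\min\{\sigma_{1}(x),\sigma_{1}(y)\}$ is symmetric in its two arguments, we have $y\in N(x)$ if and only if $x\in N(y)$. Now suppose each vertex of $G$ has exactly one neighbor. Fixing a vertex $x$ with unique neighbor $y$, symmetry forces $x\in N(y)$; but $y$ has only one neighbor, so that neighbor must be $x$. Hence the vertex set of $G$ decomposes into disjoint unordered pairs $\{x,y\}$, so the neighbor graph on $V(G)$ is a perfect matching and $|V(G)|$ is therefore even.

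The final step is a count of the product. Since $V(G\times H)=V(G)\times V(H)$, the number of vertices of the product graph is $|V(G)|\cdot|V(H)|$. As $|V(G)|$ is even by the previous step, this product is even, which is exactly the assertion of the corollary.

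I do not anticipate a genuine obstacle here, since the entire weight of the argument is carried by the theorem just established. The only point demanding a moment of care is the passage from ``every vertex has a unique neighbor'' to ``the vertex count is even''; the key to that implication is making the symmetry of the neighbor relation explicit, so that the vertices genuinely pair off into matched couples rather than forming some directed or chained structure that would carry no parity constraint.
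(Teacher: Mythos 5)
Your proof is correct and follows exactly the route the paper intends: the corollary is stated as an immediate consequence of the preceding theorem, and you supply the (omitted) details, namely that the symmetric neighbor relation with unique neighbors forces a fixed-point-free pairing of $V(G)$ (no loops, since the graph is simple), so $|V(G)|$ is even and hence so is $|V(G)|\cdot|V(H)|$. Making the involution $x\mapsto m(x)$ with $m(m(x))=x$ explicit is exactly the right point of care, and nothing further is needed.
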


\begin{definition}
A product fuzzy graph $G\times H$ is called complete if for any $(g_{1},h_{1}),(g_{2},h_{2})\in V(G\times H)$ such that $g_{1}\neq g_{2}$ and $h_{1}\neq h_{2}$, we have $\Gamma(g_{1}h_{1},g_{2}h_{2})= \min\{\Lambda(g_{1}h_{1}),\Lambda(g_{2}h_{2})\}$. We denote it by $K^{G\times H}_{\Lambda}$.
\end{definition}

\begin{example}
Below we give example of a complete product fuzzy graph.
	

\begin{center}
\includegraphics[scale=.6]{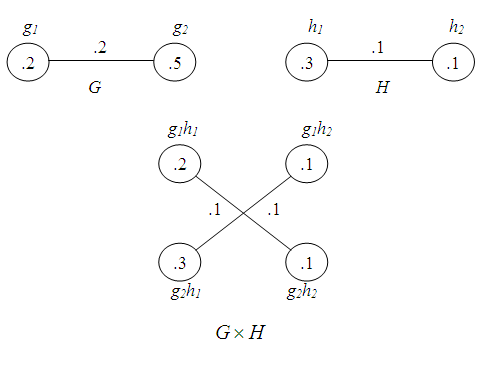}
\end{center}

Here both $G$ and $H$ are complete fuzzy graphs and hence their product $G\times H$ is complete as well. Also, each vertex of $G\times H$ has a unique neighbor and clearly $V(G\times H)$ is the only total dominating set. Hence $\nu_{t}(G\times H)=p=0.7$. Here $\{g_{1}h_{1},g_{1}h_{2}\}$ is the only minimal dominating set of $G\times H$ and therefore $\nu(G\times H)=0.3$.
\end{example}

\begin{theorem}
If both $G$ and $H$ are complete fuzzy graphs, then the product fuzzy graph $G\times H$ is complete as well.
\end{theorem}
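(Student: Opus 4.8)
The plan is to verify the defining condition of a complete product fuzzy graph by a direct computation, reducing everything to the associativity and commutativity of the binary $\min$ operation. Fix two vertices $(g_1,h_1),(g_2,h_2)\in V(G\times H)$ with $g_1\neq g_2$ and $h_1\neq h_2$; the goal is to establish that $\Gamma(g_1h_1,g_2h_2)=\min\{\Lambda(g_1h_1),\Lambda(g_2h_2)\}$, since this is exactly the requirement in the definition of a complete product fuzzy graph.

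First I would expand the left-hand side. By the definition of $\Gamma$ we have $\Gamma(g_1h_1,g_2h_2)=\min\{\mu_1(g_1g_2),\mu_2(h_1h_2)\}$. Because $G$ is complete, $\mu_1(g_1g_2)=\min\{\sigma_1(g_1),\sigma_1(g_2)\}$, and because $H$ is complete, $\mu_2(h_1h_2)=\min\{\sigma_2(h_1),\sigma_2(h_2)\}$. Collapsing the nested minima shows the left-hand side equals $\min\{\sigma_1(g_1),\sigma_1(g_2),\sigma_2(h_1),\sigma_2(h_2)\}$. Next I would expand the right-hand side using the definition of $\Lambda$: since $\Lambda(g_1h_1)=\min\{\sigma_1(g_1),\sigma_2(h_1)\}$ and $\Lambda(g_2h_2)=\min\{\sigma_1(g_2),\sigma_2(h_2)\}$, we obtain $\min\{\Lambda(g_1h_1),\Lambda(g_2h_2)\}=\min\{\sigma_1(g_1),\sigma_2(h_1),\sigma_1(g_2),\sigma_2(h_2)\}$.

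Both expressions are the minimum taken over the same four quantities $\sigma_1(g_1),\sigma_1(g_2),\sigma_2(h_1),\sigma_2(h_2)$, so by commutativity and associativity of $\min$ they coincide, which is the required equality; as the vertices were arbitrary, $G\times H$ is complete. I do not anticipate any genuine obstacle: the entire argument uses only the definitions of $\Lambda$ and $\Gamma$ together with the completeness hypotheses on $G$ and $H$, and the only nontrivial structural fact invoked is that the nested minima regroup into a single four-term minimum. The sole point worth a moment's care is confirming that the completeness condition is stated for pairs with $g_1\neq g_2$ and $h_1\neq h_2$, so that $g_1g_2\in E(G)$ and $h_1h_2\in E(H)$ are legitimate edges and the values $\mu_1(g_1g_2),\mu_2(h_1h_2)$ are well defined via the completeness of the factors.
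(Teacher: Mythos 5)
Your proof is correct: the four-term regrouping of nested minima is exactly the verification needed, and you rightly note that $g_1\neq g_2$, $h_1\neq h_2$ guarantees the relevant edges exist in the complete factors. The paper states this theorem without proof, evidently regarding it as immediate from the definitions, and your direct computation is precisely the intended argument.
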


Next example suggests that converse of previous theorem is not true in general.

\begin{example}
Consider the fuzzy graphs $G$ and $H$ and their product fuzzy graph $G\times H$.
\begin{center}
\includegraphics[scale=.5]{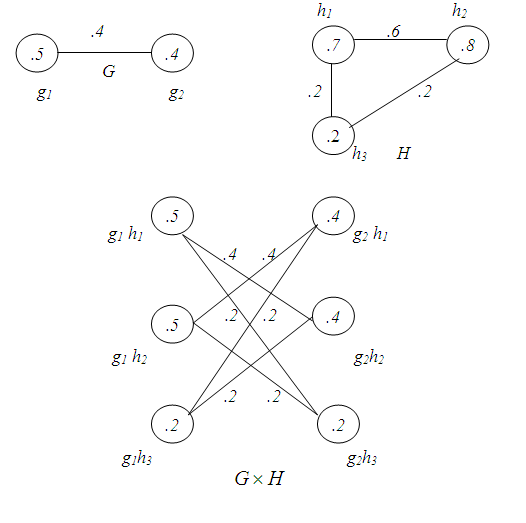}
\end{center}

Here the product graph $G\times H$ is complete while $G$ is complete but $H$ is not complete.
\end{example}

Next example shows that every product fuzzy graph may not have a total dominating set.

\begin{example}
Consider the following fuzzy graphs $G$ and $H$ and their product fuzzy graph $G\times H$.

\begin{center}
\includegraphics[scale=.5]{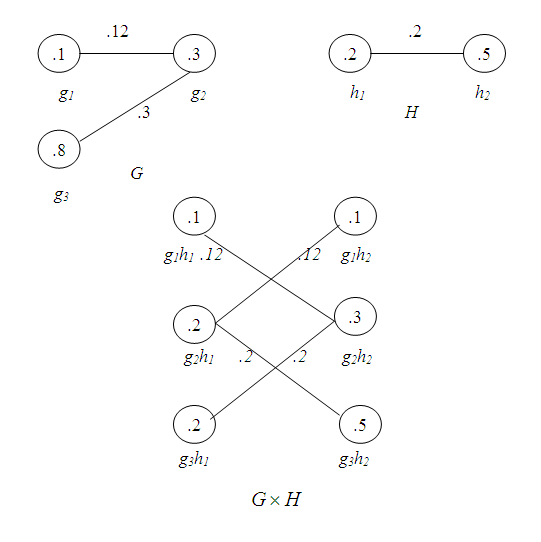}
\end{center}

Here $\{g_{1}h_{1},g_{1}h_{2},g_{2}h_{1},g_{2}h_{2}\}$ is the minimal dominating set of $G\times H$ and $\nu(G\times H)=0.7$. Also, here $g_{2}h_{1}$ dominates $g_{3}h_{2}$ and $g_{2}h_{2}$ dominates $g_{3}h_{1}$. Clearly, $G\times H$ does not have a total dominating set.
\end{example}

\begin{remark}
For a complete product fuzzy graph $K^{G\times H}_{\Lambda}$, for each $g_{1}\in V(G)$, the collection $\{\{g_{1},h\}:h\in H\}$ is a dominating set and also for each $h_{1}\in V(H)$, the collection $\{\{g,h_{1}\}:g\in G\}$ is a dominating set.
\end{remark}

\begin{theorem}
 $\nu(G\times H)=p$ if and only if $V(G\times H)$ is the only dominating set, i.e., if $\Gamma(g_{1}h_{1},g_{2}h_{2})<\min\{\Lambda(g_{1}h_{1}),\Lambda(g_{2}h_{2})\}$ for all $(g_{1},h_{1}),(g_{2},h_{2})\in V(G\times H)$.
\end{theorem}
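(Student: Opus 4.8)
The plan is to establish the chain of equivalences
$$\nu(G\times H)=p\ \Longleftrightarrow\ V(G\times H)\text{ is the unique dominating set}\ \Longleftrightarrow\ (\ast),$$
where $(\ast)$ denotes the condition that $\Gamma(g_{1}h_{1},g_{2}h_{2})<\min\{\Lambda(g_{1}h_{1}),\Lambda(g_{2}h_{2})\}$ for every pair of distinct vertices $(g_{1},h_{1}),(g_{2},h_{2})\in V(G\times H)$. The starting observation is that $V(G\times H)$ is always a dominating set: the defining condition ranges only over vertices lying \emph{outside} the set, so for $D=V(G\times H)$ it holds vacuously. Moreover $fc(V(G\times H))=\sum_{(g,h)\in V(G\times H)}\Lambda(g,h)=p$ by definition of $p$, whence $\nu(G\times H)\le p$ always. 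The content of the theorem is therefore to characterize when this inequality is forced to be an equality.

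For the second equivalence I would argue both directions by contraposition. If $(\ast)$ holds for every pair, then by the domination definition no vertex dominates any other; consequently, for any proper subset $D\subsetneq V(G\times H)$ and any $v\in V(G\times H)\setminus D$, no vertex of $D$ dominates $v$, so $D$ fails to be a dominating set, leaving $V(G\times H)$ as the only one. Conversely, if $(\ast)$ fails for some pair, then—since the product fuzzy graph condition $\Gamma(g_{1}h_{1},g_{2}h_{2})\le\min\{\Lambda(g_{1}h_{1}),\Lambda(g_{2}h_{2})\}$ holds on every edge—equality must occur there, i.e.\ some vertex $u$ dominates a distinct vertex $v$; then $D=V(G\times H)\setminus\{v\}$ is a proper dominating set, so $V(G\times H)$ is not unique.

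For the first equivalence, the ($\Leftarrow$) direction is immediate: if $V(G\times H)$ is the only dominating set, the minimum defining $\nu(G\times H)$ is taken over the single set $V(G\times H)$, giving $\nu(G\times H)=fc(V(G\times H))=p$. For ($\Rightarrow$) I would again use the contrapositive: if $V(G\times H)$ is not the unique dominating set, the previous paragraph furnishes a proper dominating set $D=V(G\times H)\setminus\{v\}$, and then $fc(D)=p-\Lambda(v)$, so that $\nu(G\times H)\le p-\Lambda(v)<p$.

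The main obstacle is exactly the strictness $fc(D)<p$ in this last step, which requires $\Lambda(v)>0$ for the removed dominated vertex $v$. This is where I would invoke the standing convention that vertex memberships are positive (i.e.\ $\sigma_{1},\sigma_{2}>0$ on all vertices, so that $\Lambda(g,h)=\min\{\sigma_{1}(g),\sigma_{2}(h)\}>0$ everywhere); without it, a dominated vertex of zero membership could be discarded without lowering the fuzzy cardinality, and the equivalence would have to be restated in terms of the support of $\Lambda$. Assuming this positivity, every removable vertex strictly decreases $fc$, each biconditional closes, and the full chain—hence the theorem—follows.
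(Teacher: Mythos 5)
Your proof is correct, and there is in fact nothing in the paper to compare it against: the author states this theorem with no proof whatsoever (unlike the preceding theorem, it is not even flagged as ``trivial and omitted''), so your write-up supplies the missing argument. Your chain of equivalences is sound: $V(G\times H)$ is vacuously a dominating set of fuzzy cardinality $p$, giving $\nu(G\times H)\leq p$ unconditionally; since the domination condition $\Gamma(g_{1}h_{1},g_{2}h_{2})=\min\{\Lambda(g_{1}h_{1}),\Lambda(g_{2}h_{2})\}$ is symmetric in the two vertices, any failure of the strict inequality produces a vertex $v$ dominated by some $u\neq v$, so $V(G\times H)\setminus\{v\}$ is a proper dominating set; and strict inequality everywhere rules out every proper subset. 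The genuinely valuable part of your proposal is the caveat you flag yourself: the implication $\nu(G\times H)=p\Rightarrow$ uniqueness needs $\Lambda(v)>0$ for the removed vertex, and the paper nowhere guarantees this, since $\sigma_{1},\sigma_{2}$ are merely valued in $[0,1]$. Without positivity, a dominated vertex of zero membership can be deleted without lowering $fc$, so $\nu(G\times H)=p$ could hold while $V(G\times H)$ is not the unique dominating set, and the theorem as literally stated would fail; making positivity an explicit hypothesis (or working with the support of $\Lambda$) is the right repair. One further tightening you implicitly perform and should state: the quantifier in the theorem must range over \emph{distinct, adjacent} pairs, since $\Gamma$ is defined only on $E(G\times H)$ and non-edges can never witness domination; with that reading, and with your positivity convention, every step of your chain closes and the proof is complete.
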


Now we define the notion of total $\alpha$-domination number of a fuzzy graph and find its relation with total domination number of product fuzzy graph.

\begin{definition}
Let $G$ be a fuzzy graph. For $\alpha>0$, a function $f:V(G)\rightarrow \mathbb{R}$ is called a total $\alpha$-dominating function if for every $v\in V$, $f(N(v))\geq \alpha$.

The total $\alpha$-domination number $\gamma^{\alpha}_{t}(G)$ of a fuzzy graph $G$ is the minimum weight of a total $\alpha$-dominating function.
\end{definition}

The following theorem gives a lower bound for the total domination number of the product fuzzy graph in terms of the total $2\alpha$-domination number of the component fuzzy graphs.

\begin{theorem}
For any two nontrivial connected fuzzy graphs $G=(V(G),\sigma_{1},\mu_{1})$ and $H=(V(H),\sigma_{2},\mu_{2})$ with $\sigma_{1}(g)\geq \alpha$ for all $g\in V(G)$ and $\sigma_{2}(h)\geq \alpha$ ($\alpha>0$) for all $h\in V(H)$, we have $$\nu_{t}(G\times H)\geq \max \{\gamma^{2\alpha}_{t}(G),\gamma^{2\alpha}_{t}(H)\}.$$
\end{theorem}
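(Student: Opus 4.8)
The plan is to prove the two inequalities $\nu_{t}(G\times H)\ge \gamma^{2\alpha}_{t}(G)$ and $\nu_{t}(G\times H)\ge \gamma^{2\alpha}_{t}(H)$ separately; since the direct product is symmetric ($G\times H\cong H\times G$), it suffices to establish the first and then interchange the roles of $G$ and $H$. I would fix a minimum total dominating set $D$ of $G\times H$, so that $fc(D)=\nu_{t}(G\times H)$ (should $G\times H$ possess no total dominating set there is nothing to prove). From $D$ I build a candidate total $2\alpha$-dominating function on $G$ by projecting the fuzzy weights of $D$ onto the first coordinate: define $f:V(G)\rightarrow\mathbb{R}$ by
$$f(g)=\sum_{h\,:\,(g,h)\in D}\Lambda(g,h).$$
Interchanging the order of summation gives at once $w(f)=\sum_{g\in V(G)}f(g)=\sum_{(g,h)\in D}\Lambda(g,h)=fc(D)=\nu_{t}(G\times H)$, so the weight of $f$ equals the quantity I wish to bound from below.

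The substance of the argument is to verify that $f$ is a total $2\alpha$-dominating function, i.e. that $f(N(g))\ge 2\alpha$ for every $g\in V(G)$. Fix such a $g$ and look at the entire column $\{g\}\times V(H)$. Since $D$ totally dominates $G\times H$, each vertex $(g,h)$ of this column is dominated by some $(g',h')\in D$; because an edge of $G\times H$ projects to an edge of each factor, any such dominator has $g'\in N(g)$ and $h'\in N(h)$. Setting $D_{g}=\{(g',h')\in D:g'\in N(g)\}$, we then have $f(N(g))=\sum_{(g',h')\in D_{g}}\Lambda(g',h')$, and the hypotheses $\sigma_{1}\ge\alpha$ and $\sigma_{2}\ge\alpha$ force $\Lambda(g',h')=\min\{\sigma_{1}(g'),\sigma_{2}(h')\}\ge\alpha$ for each summand. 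Thus it remains only to show $|D_{g}|\ge 2$. Nonemptiness of $D_{g}$ is clear, for otherwise no vertex of the column could be dominated. To exclude $|D_{g}|=1$ I would invoke the loop-free structure of the factors: if $D_{g}=\{(g^{\ast},h^{\ast})\}$ were the sole dominator available to the column, it would in particular have to dominate $(g,h^{\ast})$, which requires the edge $\{h^{\ast},h^{\ast}\}\in E(H)$, a loop, contradicting that $H$ is a simple fuzzy graph. Hence $|D_{g}|\ge 2$ and therefore $f(N(g))\ge 2\alpha$.

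Once $f$ is known to be a total $2\alpha$-dominating function, the definition of $\gamma^{2\alpha}_{t}$ yields $\gamma^{2\alpha}_{t}(G)\le w(f)=\nu_{t}(G\times H)$, and repeating the construction on the second coordinate gives $\gamma^{2\alpha}_{t}(H)\le\nu_{t}(G\times H)$; combining the two produces the stated bound. The step I expect to be the main obstacle is the neighborhood transfer asserted in the middle paragraph: one must check with care that a product-dominator $(g',h')$ of $(g,h)$ genuinely places $g'$ in the \emph{effective} open neighborhood $N(g)$, and not merely makes $g'$ crisply adjacent to $g$, since the factor membership values $\mu_{1}$ and $\mu_{2}$ interact only through the outer minimum defining $\Gamma$. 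Pinning this down — together with the clean no-loop argument that supplies a \emph{second} dominator and the uniform bound $\Lambda\ge\alpha$ that sharpens ``at least two dominators'' into the exact constant $2\alpha$ — is where the real work lies.
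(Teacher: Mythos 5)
Your construction is, in substance, the paper's own proof. The paper fixes a minimal total dominating set $S$ and sets $f(g)=\min\{2\alpha,\, fc(S\cap {}^{g}H)\}$; your uncapped $f(g)=fc(D\cap {}^{g}H)$ differs immaterially, since either version is bounded by the column weights and gives $w(f)\leq fc(D)=\nu_{t}(G\times H)$. Your verification that $f(N(g))\geq 2\alpha$ is also the paper's argument in different packaging: the paper chases a dominator $(x,h_{i})$ of $(g,h_{1})$ and then a dominator $(y,h_{j})$ of $(g,h_{i})$, using simplicity of $H$ to force $j\neq i$, and splits into the cases $x=y$ (one neighbor whose column carries weight $\geq 2\alpha$) and $x\neq y$ (two neighbors of weight $\geq\alpha$ each); your counting formulation $|D_{g}|\geq 2$, with the loop $\{h^{\ast},h^{\ast}\}$ ruling out a singleton, compresses exactly those two cases into one step, and your use of $\sigma_{1},\sigma_{2}\geq\alpha$ to get $\Lambda\geq\alpha$ per element of $D_{g}$ is precisely how the paper converts ``two dominators'' into the constant $2\alpha$.

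Concerning the obstacle you flag: you are right that the neighborhood transfer is the delicate point, and you should know the paper does not resolve it either --- it simply writes ``since $x$ is a neighbor of $g$.'' Under the paper's definitions the transfer can actually fail, because domination of $(g,h)$ by $(g',h')$ means $\min\{\mu_{1}(gg'),\mu_{2}(hh')\}=\min\{\sigma_{1}(g),\sigma_{2}(h),\sigma_{1}(g'),\sigma_{2}(h')\}$, and the outer minimum can be achieved entirely on the $H$-side. Concretely, take $\sigma_{1}(g)=\sigma_{1}(g')=1$, $\mu_{1}(gg')=0.7$, and $\sigma_{2}(h)=\sigma_{2}(h')=\mu_{2}(hh')=0.5$ (so the hypotheses hold with $\alpha=0.5$): then $\Gamma(gh,g'h')=0.5=\min\{\Lambda(gh),\Lambda(g'h')\}$, so $(g',h')$ dominates $(g,h)$ in $G\times H$, yet $\mu_{1}(gg')=0.7<1=\min\{\sigma_{1}(g),\sigma_{1}(g')\}$, so $g'\notin N(g)$ and this dominator contributes nothing to $f(N(g))$. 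So the step you isolated is a genuine gap --- but one your proposal shares with, rather than adds to, the published proof; it would close if $N$ were read as crisp adjacency, or under an extra hypothesis guaranteeing that dominators in the product project to dominators in the factors (the converse of the paper's early result that factor-wise domination implies product domination, which the paper tacitly assumes here). Modulo that shared step, your argument is complete and coincides with the paper's.
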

\begin{proof}
Let $S$ be a minimal total dominating set of $G\times H$. Define $f:V(G)\rightarrow \mathbb{R}$ such that $$f(g)=\min\{2\alpha, fc(S\cap ~^{g}H)\}$$ (Here $fc(S\cap ~^{u}H)$ denotes the fuzzy cardinality of the set $S\cap ~^{u}H$)

We prove that $f$ is a total $2\alpha$ dominating function. Let $g\in V(G)$ and $V(H)=\{h_{1},h_{2},\ldots,h_{n}\}$. Since $S$ is a total dominating set, there exists a vertex $(x,h_{i})\in S$ which dominates $(g,h_{1})$. Clearly, $x\neq g$ and $i\neq 1$ for $G$ and $H$ both are simple graphs. Now consider the vertex $(g,h_{i})$ which is dominated by some $(y,h_{j})\in S$. Here $y\neq g$ and $i\neq j$.

Now if $x=y$, since $i\neq j$, we have $(x,h_{i}),(x,h_{j})\in S\cap ~^{x}H$. Thus $f(x)\geq 2\alpha$ and since $x$ is a neighbor of $g$, we have $f(N(g))\geq 2\alpha$.

Again if $x\neq y$, since $i\neq j$, we have $f(x)\geq \alpha$ (for $g$ is a neighbor of $x$) and $f(y)\geq \alpha$ (for $g$ is a neighbor of $y$). Thus in this case also, $f(N(g))\geq 2\alpha$.

Therefore $f$ is a total $2\alpha$-dominating function with $w(f)\leq fc(S)$  and hence $\nu_{t}(G\times H)\geq \gamma^{2\alpha}_{t}(G)$.

Similarly it can be proved that $\nu_{t}(G\times H)\geq \gamma^{2\alpha}_{t}(H)$.
\end{proof}

\begin{theorem}
For any two nontrivial connected fuzzy graphs $G=(V(G),\sigma_{1},\mu_{1})$ and $H=(V(H),\sigma_{2},\mu_{2})$ with $\sigma_{1}(g)\geq \alpha$ for all $g\in V(G)$ and $\sigma_{2}(h)\geq \alpha$ ($\alpha>0$) for all $h\in V(H)$, we have $$\nu(G\times H)\geq \max \{\gamma^{2\alpha}(G),\gamma^{2\alpha}(H)\}.$$
\end{theorem}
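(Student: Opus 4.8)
The plan is to mirror the proof of the preceding theorem almost verbatim, replacing the open neighborhood $N(\cdot)$ by the closed neighborhood $N[\cdot]$; here I read $\gamma^{2\alpha}(G)$ as the ordinary $2\alpha$-domination number, i.e. the minimum weight of a function $f:V(G)\to\mathbb{R}$ satisfying $f(N[v])\geq 2\alpha$ for every $v\in V(G)$, which is the natural non-total counterpart of the total $2\alpha$-domination number just defined. By symmetry it suffices to establish $\nu(G\times H)\geq \gamma^{2\alpha}(G)$; interchanging the roles of $G$ and $H$ then yields the bound for $H$, and taking the maximum finishes the proof.

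First I would fix a minimal dominating set $S$ of $G\times H$, so that $fc(S)=\nu(G\times H)$, and define $f:V(G)\to\mathbb{R}$ by $f(g)=\min\{2\alpha, fc(S\cap {}^{g}H)\}$, exactly as before. Since the fibers ${}^{g}H=\{g\}\times V(H)$ partition $V(G\times H)$ and the truncation at $2\alpha$ only decreases each term, one gets $w(f)=\sum_{g}f(g)\leq \sum_{g} fc(S\cap {}^{g}H)=fc(S)=\nu(G\times H)$. Hence the whole theorem reduces to showing that $f$ is a $2\alpha$-dominating function, i.e. $f(N[g])\geq 2\alpha$ for every $g\in V(G)$; this would give $\gamma^{2\alpha}(G)\leq w(f)\leq \nu(G\times H)$.

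To verify $f(N[g])\geq 2\alpha$ I would fix $g$ and a vertex $(g,h_{1})$ of its fiber, and use that in a direct product any dominating vertex of $(g,h_{1})$ must lie in a fiber ${}^{x}H$ with $x\in N(g)$. The recurring trick, as in the previous theorem, is to harvest weight $\alpha$ from two distinct sources among $N[g]$ (each occupied fiber contributes $fc\geq \alpha$ because $\Lambda\geq\alpha$ everywhere, as $\sigma_{1},\sigma_{2}\geq\alpha$), or weight $2\alpha$ from a single fiber that meets $S$ in two distinct vertices. Concretely: if $(g,h_{1})\notin S$ it is dominated by some $(x,h_{i})\in S$ with $x\in N(g)$, $x\neq g$, $i\neq 1$; examining next the vertex $(g,h_{i})$ and whether the vertex of $S$ dominating it lies in the same fiber ${}^{x}H$ or a different one reproduces the $x=y$ versus $x\neq y$ dichotomy of the earlier argument and delivers $2\alpha$.

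The genuinely new point, and the step I expect to be the main obstacle, is the case in which the relevant product vertices already belong to $S$ and hence dominate themselves. In the total-domination theorem every vertex had to be dominated by a \emph{distinct} vertex, forcing the dominating vertex into a neighboring fiber; here a vertex of ${}^{g}H\cap S$ can be its own dominator, so the open-neighborhood bookkeeping breaks down. The remedy is precisely that we now work with $N[g]$: whenever $(g,h_{k})\in S$ the fiber ${}^{g}H$ itself contributes $fc(S\cap{}^{g}H)\geq\alpha$, and since $g\in N[g]$ this weight is legitimately counted. Using nontriviality of $H$ (so $|V(H)|\geq 2$) to locate a second vertex of the fiber, or a second occupied neighboring fiber, one again accumulates total weight at least $2\alpha$ over $N[g]$. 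I would enumerate these subcases carefully to confirm that no configuration escapes the bound.
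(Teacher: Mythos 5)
Your proposal is correct and follows essentially the same route as the paper: the identical truncated fiber-weight function $f(g)=\min\{2\alpha,\, fc(S\cap {}^{g}H)\}$ with $w(f)\leq fc(S)$, and the same trichotomy according to whether $S\cap {}^{g}H$ already has fuzzy cardinality $2\alpha$, meets $S$ in exactly one vertex (fiber gives $\alpha$, the dominator of a second fiber vertex gives another $\alpha$ in a neighboring fiber), or is empty (the two-step dominator chase with the same-fiber versus different-fiber dichotomy), with the closed neighborhood $N[g]$ absorbing self-domination exactly as the paper's Cases I--III do. Your empty-fiber chase is in fact a slightly streamlined version of the paper's Case III, but it is the same decomposition and the same key idea, not a different argument.
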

\begin{proof}
Let $S$ be a minimal dominating set of $G\times H$. Define $f:V(G)\rightarrow \mathbb{R}$ such that $$f(g)=\min\{2\alpha, |S\cap ~^{g}H|\}.$$

We prove that $f$ is a $2\alpha$ dominating function. Let $g\in V(G)$ and $V(H)=\{h_{1},h_{2},\ldots,h_{n}\}$.

Case I: If $f(g)=2\alpha$, then clearly $f(N[g])\geq 2\alpha$.

Case II: Let $\alpha\leq f(g)< 2\alpha$ (i.e., $S\cap ~^{g}H$ contains exactly one element).

Without loss of generality we can assume $(g,h_{1})\in S$. Then $(g,h_{i})\notin S$ for $i\geq 2$. Since $S$ is a dominating set, $(g,h_{2})$ is dominated by some $(x,h_{j})\in S$ where $x\neq g$. Thus $f(x)\geq \alpha$. Also since $x$ is adjacent to $g$, we have $f(N[g])\geq 2\alpha$.

Case III: $f(g)=0$ (i.e., $S\cap ~^{g}H$ is an empty set).

In this case $(g,h_{1})\notin S$ and $(g,h_{2})\notin S$ and so, there exist $g_{1},g_{2}\in G$ and $h_{i},h_{j}\in H$ such that $(g_{1},h_{i})\in S$ dominates $(g,h_{1})$ and $(g_{2},h_{j})\in S$ dominates $(g,h_{2})$. Now if $x_{1}\neq x_{2}$, then $f(x_{1})\geq \alpha$ and $f(x_{2})\geq \alpha$ and so, $f(N[g])\geq 2\alpha$.

Again suppose $x_{1}= x_{2}$. Now if $i\neq j$, then $f(x_{1})\geq 2\alpha$ and thus $f(N[g])\geq 2\alpha$.

Finally suppose $g_{1}=g_{2}$ and $h_{i}=h_{j})$, i.e., $(g,h_{1})$ and $(g,h_{2})$ are both dominated by $(g_{1},h_{i})$. Clearly, $i\neq 1,2$. Then $(g,h_{i})$ must be dominated by some $(u,h_{p})$ where $p\neq i$. If $u=g_{1}$, then $f(g_{1})\geq 2\alpha$ and so $f(N[g])\geq 2\alpha$.

Again if $u\neq g_{1}$, then $f(g_{1})\geq \alpha$ and $f(u)\geq \alpha$ and thus $f(N[g])\geq 2\alpha$.

Therefore $f$ is a $2\alpha$-dominating function of $G$ with $w(f)\leq fc(S)$ and hence $\nu(G\times H)\geq \gamma^{2\alpha}(G)$.

In a similar fashion it can be proved that $\nu(G\times H)\geq \gamma^{2\alpha}(H)$.
\end{proof}

\proof[Conclusion] 
The theory of domination and total domination in product graphs is a rich topic of investigation both from the point of view of theory and application. Computer science is a branch in which the use of product graphs have become indispensable, specifically, in problems such as load balancing for massively parallel computer architectures. In this paper we have defined the concept of direct product of two fuzzy graphs and proved some important results related to the theory of domination in this new set up. The concepts have also been illustrated with precise examples. Other graphical invariants such as independence and graph coloring will be good subjects to investigate in the similar setting of direct product fuzzy graph.  
%


\bibliographystyle{plain}
\bibliography{dom_prod}

\end{document}